\documentclass[conference]{IEEEtran}
\IEEEoverridecommandlockouts
\usepackage{cite}
\usepackage{amsmath,amssymb,amsfonts}
\usepackage{amsthm}
\usepackage{algorithmic}
\usepackage{graphicx}
\usepackage{textcomp}
\usepackage{xcolor}
\usepackage{tikz}
\usepackage{comment}
\usetikzlibrary{arrows.meta, positioning}
\def\BibTeX{{\rm B\kern-.05em{\sc i\kern-.025em b}\kern-.08em
    T\kern-.1667em\lower.7ex\hbox{E}\kern-.125emX}}

\newtheorem{theorem}{Theorem}
\newtheorem{remark}{Remark}

\begin{document}

\title{{Stochastic Delayed Dynamics of Rumor Propagation with Awareness and Fact-Checking}\\
}
\author{%
\thanks{Published in Proc. IEEE WiDS PSU 2026. DOI 10.1109/WiDSPSU69342.2026.00042.}%
\IEEEauthorblockN{1\textsuperscript{st} Lamia Alyami}
\IEEEauthorblockA{\textit{Department of Mathematics,}\\ 
\textit{College of Sciences and Arts,}\\ Najran University, P.O. Box 1988, \\ 
Najran 11001, Saudi Arabia.\\
Email: lmlasloom@nu.edu.sa\\
}

\and

\IEEEauthorblockN{2\textsuperscript{nd} Anis Hamadouche}
\IEEEauthorblockA{\textit{The School of Engineering \& Physical Sciences} \\
\textit{Heriot-Watt University}\\
Edinburgh EH14 4AS, UK \\
anis.hamadouche@hw.ac.uk}

\and

\IEEEauthorblockN{3\textsuperscript{rd} Amir Hussain}
\IEEEauthorblockA{\textit{SDAIA-KFUPM Joint Research Centre for Artificial Intelligence} \\
\textit{King Fahd University of Petroleum and Minerals}\\
Dhahran, Saudi Arabia  \\
amir.hussain@kfupm.edu.sa}

}

\maketitle

\begin{abstract}
This paper presents a stochastic delayed differential model for rumor propagation during infodemic that incorporates human behavioral response, public skepticism and fact-checking mechanisms. A discrete time delay is introduced to model natural lags in information processing and institutional response. Additionally, we adopt additive stochastic perturbations to model random fluctuations in social interaction and exposure. We present a rigorous stability analysis of the proposed rumor transmission model and derive convergence guarantees under reproduction number conditions. We also validate the model by numerical simulations and analyze the outbreak severity and quantify uncertainty under variable information processing delays. The results highlight the importance of timely awareness and fact-checking interventions for mitigating misinformation spread during pandemics
\end{abstract}

\begin{IEEEkeywords}
infodemic, pandemic COVID-19, misinformation, stochastic delay differential equations, fact-checking, awareness intervention.
\end{IEEEkeywords}

\section{Introduction}

Infodemic is defined as an overwhelming surplus of information—both online and offline—that includes inaccurate or deceptive content. This phenomenon significantly complicates medical emergency responses by obscuring reliable facts. Consequently, people may seek out dangerous, unproven remedies or reject life-saving interventions like vaccines. Ultimately, infodemic can damage public confidence in science and healthcare, creating severe risks to global health~\cite{world2024infodemic, tangcharoensathien2020framework, tomes2022historical, gisondi2022deadly, do2022infodemics}.

During the COVID-19 pandemic, a "perfect storm" of infodemics emerged. This situation arose when an unprecedented volume of data intersected with significant information deficits. Consequently, unsubstantiated claims regarding the virus's origins and various conspiracies about vaccination efforts and state mandates propagated quickly through social media and digital platforms. The widespread distribution of false information regarding the virus and its associated dangers prompted many individuals to adopt behaviors that jeopardized both personal and community safety. Furthermore, there is growing concern that this "infodemic" of misinformation is spilling over, potentially undermining public health efforts for other infectious diseases like measles and the flu~\cite{caceres2022impact}. It is therefore crucial to manage infodemics by fostering critical thinking and promoting accurate information

Infodemiology, which analyzes electronic health information, is gaining momentum as a crucial area of research. Investing in Infodemic Management (IM) offers dual benefits: it provides immediate protection for public health while simultaneously building a robust, evidence-based foundation to inform future, action-oriented strategies, tools, and protocols for managing information crises~\cite{ishizumi2024beyond}. In the absence of robust, transparent information management, health authorities risk losing public confidence, as unchecked rumors fuel dangerous behaviors. Mis/disinformation can prolong and intensify the impact of crises, leading to severe human suffering and economic devastation~\cite{world2024infodemic}.

Misinformation online is defined by its rapid, viral, and far-reaching nature, often outpacing the spread of accurate information. Several key drivers contribute to this phenomenon, including human behaviors like confirmation bias, high emotional engagement, and a preference for novelty. Additionally, platform algorithms designed for engagement, alongside the use of social bots, amplify the dissemination of false content~\cite{govindankutty2024epidemic}.

The dynamics of misinformation propagation in social networks resemble epidemiological models, where 'infection' represents exposure to and sharing of false information. This analogy has led to various mathematical models attempting to capture and predict the spread of misinformation, drawing parallels with disease transmission models~\cite{daley1964epidemics,
daley1965stochastic, sun2021uncertain, li2021social, zhao2021detecting, czerniak2023scoping}. Understanding these interactions is critical for designing successful misinformation prevention techniques. These measures include increasing digital literacy, upgrading platform algorithms to detect and prevent the dissemination of misleading information, and conducting targeted interventions to disrupt the cycle of misinformation transmission.

Even though several mathematical epidemic models have evolved, most still need to explain the human nature of selection, skepticism and social influence. The current models also fail to account for the reality that people may respond with significant delay as a function of information processing. Moreover, they also fail to quantify the conditions for equilibrium under stochastic rumor spread dynamics. Considering this, the following contributions were made to this work.

\begin{itemize}
    \item We propose an extended SEIR-type rumor model incorporating skeptical
    and fact-checked compartments to capture behavioral resistance and
    verification effects.
    \item A discrete time delay is introduced to represent delayed information
    processing and institutional response.
    \item Stochastic perturbations are incorporated to model random fluctuations
    in rumor exposure and social interaction.
    \item Equilibrium and local stability analyses are developed, leading to a
    threshold condition based on the basic reproduction number.
    \item Monte Carlo simulations and an ablation study are performed to
    investigate the combined effects of delay and reproduction number on rumor
    outbreak severity.
\end{itemize}

The remainder of this paper is organized as follows. Section~II presents the proposed stochastic delay model. Section~III
provides equilibrium and stability analysis. Section~IV presents numerical
experiments and an ablation study. Finally, Section~V concludes the paper and
discusses future research directions.


\section{Proposed Process Model}
Motivated by misinformation spread delayed dynamics (delays in institutional responses, heterogeneous individual reactions) and uncertainty in social interactions we propose a new model that incorporates information delay, behavioral response and public awareness mechanisms. The proposed stochastic delay differential equation model (SDDE) extends classical epidemic-inspired frameworks under the following assumptions:

\begin{itemize}
    \item The study assumes a demographically and geographically closed population, where no individuals enter or leave the system.
    
    \item Assuming homogeneous social structure, the model adopts a mass-action principle, where the transmission rate is directly proportional to the density of contacts.
    
    \item The transition from hearing a rumor to spreading it is not instantaneous; individuals first enter an "exposed" phase, where they pause to verify the information and think through its implications.
    
    \item Some individuals who spread rumors eventually lose interest and stop, while others become skeptical and adopt a stance of passive resistance.
    
    \item Those who doubt a rumor might eventually confirm its inaccuracy via reliable sources, shifting into a fact-checked category of people who actively oppose misinformation.
    
    \item Because public awareness campaigns and response measures are not instantaneous, a discrete time lag ($\tau>0$) is used to model this significant delay in effect.
    
    \item Real social networks are characterized by unavoidable, unpredictable fluctuations in user behavior and interaction, which are represented via Brownian motion-based stochastic perturbations.
\end{itemize}


Accordingly, the total population is divided into six mutually exclusive
compartments: $S(t)$ denotes the susceptible individuals who have not yet
encountered the rumor, $E(t)$ represents the exposed individuals who have
encountered the rumor but are not yet spreading it, $I(t)$ corresponds to the
active spreaders who disseminate the rumor, $R(t)$ describes the removed
individuals who no longer propagate the rumor, $I_g(t)$ denotes the skeptical
individuals who refrain from spreading the rumor, and $F(t)$ represents the
fact-checked individuals who have verified the information and actively reject
the rumor.

The total population satisfies the conservation law
\begin{equation}
S(t) + E(t) + I(t) + R(t) + I_g(t) + F(t) = N,
\qquad t\geq 0.
\end{equation}

\subsection{Stochastic Delay Differential Equation Model}

To capture both response delay and uncertainty in rumor diffusion, we formulate the following stochastic delay differential system:
\begin{equation}
\begin{aligned}
dS(t) &= \Big[-\beta S(t) I(t-\tau)\Big]dt 
        + \sigma_S S(t)\, dW_S(t), \\
dE(t) &= \Big[\beta S(t) I(t-\tau) - \sigma E(t)\Big]dt 
        + \sigma_E E(t)\, dW_E(t), \\
dI(t) &= \Big[\sigma E(t) - (\gamma+\rho)I(t)\Big]dt 
        + \sigma_I I(t)\, dW_I(t), \\
dR(t) &= \Big[\gamma I(t)\Big]dt 
        + \sigma_R R(t)\, dW_R(t), \\
dI_g(t) &= \Big[\rho I(t) - \theta I_g(t)\Big]dt 
        + \sigma_{Ig} I_g(t)\, dW_{Ig}(t), \\
dF(t) &= \Big[\theta I_g(t)\Big]dt 
        + \sigma_F F(t)\, dW_F(t),
\end{aligned}
\label{eq:proposed_model}
\end{equation}
where $\tau>0$ denotes the discrete information delay, and 
$\{W_S(t),W_E(t),W_I(t),W_R(t),W_{Ig}(t),W_F(t)\}$ 
are independent standard Wiener processes defined on a complete probability space 
$(\Omega,\mathcal{F},\mathbb{P})$.


The model parameters are defined as follows: $\beta>0$ denotes the rumor
transmission rate, $\sigma>0$ is the activation rate at which exposed
individuals become active spreaders, $\gamma>0$ represents the rate at which
spreaders lose interest and stop disseminating the rumor, $\rho>0$ is the rate
at which spreaders become skeptical and move into the ignorant class, and
$\theta>0$ denotes the verification or fact-checking rate. The constant
$\tau>0$ represents the behavioral and institutional response delay. Finally,
the nonnegative parameters $\sigma_S,\sigma_E,\sigma_I,\sigma_R,\sigma_{Ig}$,
and $\sigma_F$ represent the stochastic noise intensities associated with each
compartment, capturing random fluctuations in social interactions and rumor
exposure.

The stochastic perturbation terms model uncertainty arising from unpredictable online activity, irregular exposure to misinformation, and random behavioral changes in individuals during crisis situations.

\subsection{Initial Conditions and Delay History}

To ensure well-defined dynamics, the model requires initial history functions over the delay interval $[-\tau,0]$. Let
\begin{equation}
\begin{split}
&(S(\xi),E(\xi),I(\xi),R(\xi),I_g(\xi),F(\xi))
=
\\& (\phi_1(\xi),\phi_2(\xi),\phi_3(\xi),\phi_4(\xi),\phi_5(\xi),\phi_6(\xi)),
 \xi\in[-\tau,0],
\end{split}
\end{equation}
where $\phi_i(\xi)\geq 0$ are continuous functions satisfying
\begin{equation}
\phi_1(\xi)+\phi_2(\xi)+\phi_3(\xi)+\phi_4(\xi)+\phi_5(\xi)+\phi_6(\xi)=N.
\end{equation}



\subsection{Well-Posedness and Positivity of Solutions}

Let $(\Omega,\mathcal{F},\{\mathcal{F}_t\}_{t\geq 0},\mathbb{P})$ be a complete filtered probability space satisfying the usual conditions, and let 
$\{W_S(t),W_E(t),W_I(t),W_R(t),W_{Ig}(t),W_F(t)\}_{t\geq 0}$ 
be independent one-dimensional standard Brownian motions adapted to $\{\mathcal{F}_t\}_{t\geq 0}$.  
Consider the stochastic delay differential system~\eqref{eq:proposed_model} with initial history
\[
X(\xi)=\Phi(\xi), \qquad \xi\in[-\tau,0],
\]
where $X(t)=(S(t),E(t),I(t),R(t),I_g(t),F(t) )^\top$ and 
$\Phi(\xi)=(\phi_1(\xi),\phi_2(\xi),\phi_3(\xi),\phi_4(\xi),\phi_5(\xi),\phi_6(\xi))^\top$
is an $\mathcal{F}_0$-measurable continuous function satisfying $\phi_i(\xi)\geq 0$ for all $\xi\in[-\tau,0]$.

The drift and diffusion coefficients of system~\eqref{eq:proposed_model} are locally Lipschitz continuous in $(X(t),X(t-\tau))$ and satisfy a linear growth condition. Hence, by standard existence and uniqueness results for stochastic delay differential equations, there exists a unique maximal $\{\mathcal{F}_t\}$-adapted continuous solution $X(t)$ defined on $[-\tau,T_e)$, where $T_e$ denotes the explosion time (see Appendix~\ref{appendix:lipschitz_local_sol}).

Furthermore, using a standard localization argument together with It\^{o}'s formula and the linear growth bound, one can show that $T_e=\infty$ almost surely (see Appendix~\ref{appendix:global_solution}). Therefore, system~\eqref{eq:proposed_model} admits a unique global strong solution for all $t\geq 0$.

Finally, the nonnegative orthant $\mathbb{R}_+^6$ is positively invariant. In particular, since each diffusion term is of multiplicative type (i.e., proportional to the corresponding state variable), the boundary of $\mathbb{R}_+^6$ is absorbing. Hence, for any admissible nonnegative initial history $\Phi$, the solution satisfies
\[
S(t),E(t),I(t),R(t),I_g(t),F(t)\geq 0, 
\qquad \forall t\geq 0,
\quad \text{a.s}.
\]
Consequently, all state variables remain biologically and socially meaningful as population densities throughout the evolution of the system.

\subsection{Model Interpretation}

System~\eqref{eq:proposed_model} generalizes classical rumor propagation models by incorporating three key features: (i) discrete time delay in rumor influence, representing realistic hesitation and delayed institutional response; (ii) behavioral transitions into skeptical and fact-checking states, capturing voluntary resistance to misinformation; and (iii) stochastic perturbations, representing random fluctuations in exposure, user engagement, and social interaction intensity. This stochastic delay-aware framework provides a mathematically rigorous foundation for the equilibrium, stability, and optimal control analyses developed in subsequent sections.

\section{Equilibrium Analysis}

In this section, we investigate the equilibrium structure of the proposed rumor
propagation model. Since equilibrium points are defined as constant solutions of the drift system,
and since the delay does not affect constant solutions, the equilibria of
system~\eqref{eq:proposed_model} coincide with those of the associated
deterministic delay model~\eqref{eq:proposed_model}. Therefore, we analyze the
equilibria of the deterministic system.

\subsection{Equilibrium Points}

An equilibrium point 
\[
\mathcal{E}^*=(S^*,E^*,I^*,R^*,I_g^*,F^*)
\]
is defined as a constant solution of system~\eqref{eq:proposed_model}, i.e.,
a point satisfying
\[
\frac{dS}{dt}=\frac{dE}{dt}=\frac{dI}{dt}=
\frac{dR}{dt}=\frac{dI_g}{dt}=\frac{dF}{dt}=0.
\]
Since at equilibrium we have $I(t-\tau)=I(t)=I^*$, the delay term vanishes and
the equilibrium conditions are obtained by setting the right-hand sides of
\eqref{eq:proposed_model} equal to zero:
\begin{equation}
\begin{aligned}
0 &= -\beta S^* I^*, \\
0 &= \beta S^* I^* - \sigma E^*, \\
0 &= \sigma E^* - (\gamma+\rho)I^*, \\
0 &= \gamma I^*, \\
0 &= \rho I^* - \theta I_g^*, \\
0 &= \theta I_g^*.
\end{aligned}
\label{eq:equilibrium_conditions}
\end{equation}

From the fourth equation in~\eqref{eq:equilibrium_conditions}, since $\gamma>0$,
it follows immediately that
\begin{equation}
I^*=0.
\label{eq:I_star_zero}
\end{equation}
Substituting $I^*=0$ into the third equation yields $E^*=0$, and the fifth and
sixth equations imply $I_g^*=0$. Therefore, every equilibrium point satisfies
\begin{equation}
E^*=I^*=I_g^*=0.
\end{equation}

The remaining state variables satisfy only the population conservation law
\begin{equation}
S^*+R^*+F^*=N,
\qquad S^*,R^*,F^*\geq 0.
\label{eq:equilibrium_family}
\end{equation}

Hence, the system admits a continuum of rumor-free equilibria of the form
\begin{equation}
\begin{split}
\mathcal{E}^*(S^*,R^*,F^*)
&=
(S^*,0,0,R^*,0,F^*),
\\ S^*+R^*+F^*&=N.
\end{split}
\label{eq:equilibrium_set}
\end{equation}

\subsection{Rumor-Free Equilibrium}

The most relevant equilibrium is the \emph{rumor-free equilibrium} (RFE), which
corresponds to a population where no rumor has been transmitted and all
individuals remain susceptible. This equilibrium is given by
\begin{equation}
\mathcal{E}_0=(N,0,0,0,0,0).
\label{eq:RFE}
\end{equation}

This point represents the initial state of the population prior to rumor
introduction. The stability of $\mathcal{E}_0$ determines whether the rumor can
invade the population and generate a significant outbreak.

\subsection{Absence of Endemic Equilibrium}


This model differs from classical ones because it lacks a persistent endemic state. Due to absorbing states $R(t)$ and $F(t)$ and a lack of new susceptible individuals, the rumor propagation will always die out as the number of spreaders drops to zero. 

Consequently, the longevity of a rumor is defined not by a stable, long-term state, but by temporary surges in activity and its ultimate, cumulative reach. Specifically, the long-term behavior is determined by the asymptotic distribution of individuals across the $S$, $R$, and $F$ classes.

\subsection{Final-Size Equilibrium Set}

From~\eqref{eq:equilibrium_set}, it follows that the asymptotic behavior of the
system converges to a limiting equilibrium of the form
\[
(S_\infty,0,0,R_\infty,0,F_\infty),
\]
where $S_\infty+R_\infty+F_\infty=N$. The values $(S_\infty,R_\infty,F_\infty)$
depend on the initial history and model parameters. This limiting equilibrium
represents the final size of the rumor outbreak and quantifies the proportion
of the population that remains susceptible, becomes removed, or transitions into
the fact-checked class.

\subsection{Stochastic Stability of the Rumor-Free Equilibrium}

In this subsection, we study the stochastic stability of the rumor-free
equilibrium of the stochastic delay system~\eqref{eq:proposed_model}.
Let
\[
\mathcal{E}_0=(N,0,0,0,0,0)
\]
denote the rumor-free equilibrium of the deterministic drift system.

\subsubsection{Linearized Stochastic Delay System}

Let $s(t)=S(t)-N$ and consider small perturbations around $\mathcal{E}_0$. By
neglecting higher-order nonlinear terms, the stochastic delay system
\eqref{eq:proposed_model} can be approximated locally by the linear SDDE
subsystem in $(E,I)$:
\begin{equation}
\begin{aligned}
dE(t) &= \Big(\beta N I(t-\tau)-\sigma E(t)\Big)dt
        +\sigma_E E(t)\,dW_E(t), \\
dI(t) &= \Big(\sigma E(t)-(\gamma+\rho)I(t)\Big)dt
        +\sigma_I I(t)\,dW_I(t).
\end{aligned}
\label{eq:linear_SDDE}
\end{equation}
The remaining compartments $(R,I_g,F)$ are driven by $I(t)$ and therefore do not
affect the local stability of $\mathcal{E}_0$.

\subsubsection{Mean-Square Stability Definition}

The equilibrium $\mathcal{E}_0$ is said to be \emph{mean-square stable} if for
any initial history $\Phi\in C([-\tau,0],\mathbb{R}^2)$,
\[
\lim_{t\to\infty}\mathbb{E}\Big[E(t)^2+I(t)^2\Big]=0.
\]

\subsubsection{Mean-Square Stability Theorem}

\begin{theorem}
Consider the linearized stochastic delay subsystem~\eqref{eq:linear_SDDE}. If
\begin{equation}
\frac{\beta N}{\gamma+\rho}
<
1-\frac{\sigma_I^2}{2(\gamma+\rho)},
\label{eq:stochastic_threshold}
\end{equation}
then the rumor-free equilibrium $\mathcal{E}_0$ is exponentially stable in the
mean-square sense. In particular,
\[
\lim_{t\to\infty}\mathbb{E}\Big[E(t)^2+I(t)^2\Big]=0,
\qquad \forall \tau\geq 0.
\]
\label{thm:stochastic_stability}
\end{theorem}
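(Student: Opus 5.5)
The plan is a Lyapunov--Krasovskii argument for the linear SDDE~\eqref{eq:linear_SDDE}. The functional has to absorb the delayed term $I(t-\tau)$ without a penalty that grows with $\tau$. I would take
\[
V(t)=a\,E(t)^2+b\,I(t)^2+c\int_{t-\tau}^{t} I(s)^2\,ds ,
\]
with positive weights $a,b,c$ to be tuned. Writing $I_\tau=I(t-\tau)$, It\^{o}'s formula gives the generator
\[
\mathcal{L}V=2aE(\beta N I_\tau-\sigma E)+a\sigma_E^2E^2+2bI\big(\sigma E-(\gamma+\rho)I\big)+b\sigma_I^2I^2+cI^2-cI_\tau^2 .
\]
The integral term contributes $cI^2-cI_\tau^2$, which is exactly what kills the delayed cross term. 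This is why the conclusion can hold for all $\tau\ge 0$.

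Next I would bound the two cross terms by weighted Young inequalities. For the first, use $2aβN EI_\tau\le a\beta N(\varepsilon E^2+\varepsilon^{-1}I_\tau^2)$. For the second, use $2b\sigma EI\le b\sigma(\delta E^2+\delta^{-1}I^2)$. Then choose $c=a\beta N/\varepsilon$ so the $I_\tau^2$ terms cancel. What remains is $\mathcal{L}V\le -\kappa_1E^2-\kappa_2I^2$, with
\[
\kappa_1=a(2\sigma-\sigma_E^2-\beta N\varepsilon)-b\sigma\delta,\qquad
\kappa_2=b\big(2(\gamma+\rho)-\sigma_I^2-\sigma\delta^{-1}\big)-a\beta N\varepsilon^{-1}.
\]
The key step is to choose $a,b,\varepsilon,\delta$ so that $\kappa_1,\kappa_2>0$. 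Dividing by $2(\gamma+\rho)$, the $I$-coefficient condition is $\beta N/(\gamma+\rho)<1-\sigma_I^2/(2(\gamma+\rho))$, up to the coupling terms. I would try to choose the balance $a\beta N/\varepsilon\approx b\beta N$ together with $\delta$ large, so that the $I^2$ balance reduces to $2\beta N+\sigma_I^2<2(\gamma+\rho)$, which is~\eqref{eq:stochastic_threshold}. A strict inequality then leaves room for small slack in $\delta^{-1}$.

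With $\kappa_1,\kappa_2>0$, I would take expectations; the stochastic integrals have zero mean under the usual localization, which the global-existence argument of Appendix~\ref{appendix:global_solution} supports. This gives $\frac{d}{dt}\mathbb{E}V\le-\kappa\,\mathbb{E}[E^2+I^2]$. To upgrade this to an exponential rate, I would use the modified functional $e^{\lambda t}V$ with weight $c\int_{t-\tau}^t e^{\lambda(s-t+\tau)}I(s)^2ds$ and $\lambda>0$ small. This yields $\mathbb{E}[E(t)^2+I(t)^2]\le Ke^{-\lambda t}\sup_{[-\tau,0]}\|\Phi\|^2$, and the limit in the statement follows.

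The main obstacle is making $\kappa_1>0$ simultaneously. The $E^2$ coefficient needs $2\sigma-\sigma_E^2>0$ plus room for $\beta N\varepsilon$ and $b\sigma\delta/a$. These pull against the choices that make $\kappa_2$ match~\eqref{eq:stochastic_threshold}: large $\delta$ and small $\varepsilon$. In fact the statement cannot hold without some control of $\sigma_E$. By It\^{o}'s formula applied to $E^2$, if $\sigma_E^2\ge 2\sigma$ then $\mathbb{E}E^2$ cannot decay, because $\mathbb{E}[E\,I_\tau]$ is nonnegative for nonnegative histories. So the argument I would write makes explicit the needed companion assumption, $\sigma_E^2<2\sigma$ with enough margin. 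I expect the proof to require this hypothesis, or a strengthened condition involving $\sigma_E$, rather than~\eqref{eq:stochastic_threshold} alone. I would first try to tune $a/b$ so that the only residual requirement is $\sigma_E^2<2\sigma$.
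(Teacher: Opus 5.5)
Your diagnosis is right, and your method is sounder than the paper's, which also does not prove the statement. The paper takes $V=E^2+cI^2$ with no integral term. It silently needs $2\sigma-\sigma_E^2>0$ to get $\kappa_1>0$. It then disposes of $I(t-\tau)$ by \emph{assuming} \eqref{eq:raz_condition} for all $t\ge0$ along the solution. That is a property of the unknown trajectory, not a hypothesis one may impose; a genuine Razumikhin theorem only asks for the generator bound on the set where such an inequality holds. Finally it concludes under the extra inequality \eqref{eq:raz_stability_cond}, which is never derived from \eqref{eq:stochastic_threshold}. Its choice of small $\varepsilon_1$ makes that inequality harder, since $\kappa_3=\beta^2N^2/\varepsilon_1$. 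Your Lyapunov--Krasovskii functional with $c\int_{t-\tau}^tI(s)^2\,ds$ is the correct repair. The delayed term cancels exactly, so delay-independence needs no assumption on trajectories, and the $e^{\lambda t}$-weighted version gives the exponential rate.

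The gap is in how you propose to finish. No tuning of $a/b$ can leave $\sigma_E^2<2\sigma$ as the only extra requirement, because \eqref{eq:stochastic_threshold} together with $\sigma_E^2<2\sigma$ is not sufficient. Set $a_1=2\sigma-\sigma_E^2$, $a_2=2(\gamma+\rho)-\sigma_I^2$ and $a_3=\sigma+\gamma+\rho$. At $\tau=0$ the moments $(\mathbb{E}E^2,\mathbb{E}I^2,\mathbb{E}[EI])$ satisfy a closed linear system with an irreducible Metzler matrix. Given $a_1,a_2>0$, that matrix is Hurwitz iff $a_1a_2a_3>2\sigma\beta N(a_1+a_2)$. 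Take $\sigma=\gamma+\rho=1$, $\sigma_I=0$, $\sigma_E^2=0.2$ and $\beta N=0.97$. Then the Hurwitz condition reads $7.2>7.372$, which is false, so the second moments grow even though both of your conditions hold.

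Your heuristic ``$\delta$ large, $\varepsilon$ small'' also fails. The condition $\kappa_1>0$ forces $a/b>\sigma\delta/(a_1-\beta N\varepsilon)$ with $\varepsilon<a_1/(\beta N)$, while $\kappa_2>0$ caps $a/b$ below $\varepsilon a_2/(\beta N)$, so $\delta$ cannot be large. Eliminating $a/b$ and optimizing gives $\varepsilon=\delta=a_1/(2\beta N)$ and the exact feasibility condition
\[
\sigma_E^2<2\sigma \quad\text{and}\quad (2\sigma-\sigma_E^2)\bigl(2(\gamma+\rho)-\sigma_I^2\bigr)>4\sigma\beta N,
\]
i.e.\ $\beta N/(\gamma+\rho)<\bigl(1-\sigma_E^2/(2\sigma)\bigr)\bigl(1-\sigma_I^2/(2(\gamma+\rho))\bigr)$. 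This is the statement your functional actually proves for every $\tau\ge0$.

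When $\sigma_E=0$ this condition coincides with \eqref{eq:stochastic_threshold}. In that case take $\varepsilon=\delta=\sigma/(\beta N)$ and any ratio $\sigma/(\beta N)<a/b<\sigma\bigl(2(\gamma+\rho)-\sigma_I^2-\beta N\bigr)/(\beta N)^2$. That interval is nonempty exactly under \eqref{eq:stochastic_threshold}, so your argument proves the theorem as stated when $\sigma_E=0$. In general, the theorem should carry the product condition above, not \eqref{eq:stochastic_threshold} plus a weaker companion hypothesis.
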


\begin{proof}
Define the Lyapunov function
\[
V(t)=E(t)^2+cI(t)^2,
\]
where $c>0$ is a constant to be chosen. Applying It\^{o}'s formula yields
\begin{align}
d(E^2)
&=2E\,dE+(dE)^2 \nonumber \\
&=2E\Big(\beta N I_\tau-\sigma E\Big)dt
+2\sigma_EE^2\,dW_E
+\sigma_E^2E^2dt,
\label{eq:Ito_E2}
\end{align}
where $I_\tau=I(t-\tau)$. Similarly,
\begin{align}
d(I^2)
&=2I\,dI+(dI)^2 \nonumber \\
&=2I\Big(\sigma E-(\gamma+\rho)I\Big)dt
+2\sigma_II^2\,dW_I
+\sigma_I^2I^2dt.
\label{eq:Ito_I2}
\end{align}

Multiplying~\eqref{eq:Ito_I2} by $c$ and adding to~\eqref{eq:Ito_E2}, we obtain
\begin{align}
dV(t)
&=\Big[2\beta N EI_\tau-2\sigma E^2+\sigma_E^2E^2 \nonumber \\
&\quad +2c\sigma EI-2c(\gamma+\rho)I^2+c\sigma_I^2I^2\Big]dt
+dM(t),
\label{eq:dV}
\end{align}
where $M(t)$ is a local martingale term.

Using Young's inequality, for any $\varepsilon_1,\varepsilon_2>0$,
\begin{align}
2\beta N EI_\tau
&\leq \varepsilon_1E^2+\frac{\beta^2N^2}{\varepsilon_1}I_\tau^2, \\
2c\sigma EI
&\leq c\varepsilon_2E^2+\frac{c\sigma^2}{\varepsilon_2}I^2.
\end{align}

Substituting these bounds into~\eqref{eq:dV} gives
\begin{align}
&dV(t)
\leq
\Big[
-(2\sigma-\sigma_E^2-\varepsilon_1-c\varepsilon_2)E^2 \nonumber \\
&
-\Big(2c(\gamma+\rho)-c\sigma_I^2-\frac{c\sigma^2}{\varepsilon_2}\Big)I^2
+\frac{\beta^2N^2}{\varepsilon_1}I_\tau^2
\Big]dt
+dM(t).
\label{eq:dV_bound}
\end{align}

Choosing $c>0$ and $\varepsilon_2$ such that
\[
2c(\gamma+\rho)-c\sigma_I^2-\frac{c\sigma^2}{\varepsilon_2}>0,
\]
and then selecting $\varepsilon_1>0$ sufficiently small, we obtain constants
$\kappa_1,\kappa_2>0$ such that
\begin{equation}
\mathbb{E}[dV(t)]
\leq
-\kappa_1\mathbb{E}[E(t)^2]dt
-\kappa_2\mathbb{E}[I(t)^2]dt
+\kappa_3\mathbb{E}[I(t-\tau)^2]dt,
\label{eq:mean_bound}
\end{equation}
for some $\kappa_3>0$.



Since the coefficients are constant and the system is linear, the delay term can
be treated using a Razumikhin-type approach. In particular, let $V(t)$ be the
quadratic Lyapunov function introduced above and assume that there exists a
constant $q>1$ such that the Razumikhin condition holds:
\begin{equation}
V(t-\tau)\leq q\,V(t),
\qquad \forall t\geq 0.
\label{eq:raz_condition}
\end{equation}
Condition~\eqref{eq:raz_condition} restricts the influence of the delayed state
by requiring that the Lyapunov functional evaluated at the delayed time does not
dominate its current value. Such an assumption is standard in Razumikhin theory
for delay differential systems and is used to ensure that the delay feedback
remains bounded relative to the instantaneous state.

Using the definition of $V(t)$, there exists a constant $c_0>0$ such that
\begin{equation}
I(t)^2 \leq c_0 V(t),
\qquad 
I(t-\tau)^2 \leq c_0 V(t-\tau).
\label{eq:I_bound_V}
\end{equation}
Hence, by~\eqref{eq:raz_condition} we obtain
\begin{equation}
I(t-\tau)^2 \leq c_0 q V(t).
\label{eq:delay_bound}
\end{equation}

Substituting~\eqref{eq:delay_bound} into~\eqref{eq:mean_bound} yields
\begin{equation}
\mathbb{E}[dV(t)]
\leq
-\kappa_1\mathbb{E}[E(t)^2]dt
-\kappa_2\mathbb{E}[I(t)^2]dt
+\kappa_3 c_0 q\,\mathbb{E}[V(t)]dt.
\label{eq:mean_bound_raz}
\end{equation}

Furthermore, since $V(t)$ is positive definite, there exist constants
$m_1,m_2>0$ such that
\begin{equation}
m_1\big(E(t)^2+I(t)^2\big)
\leq V(t)
\leq m_2\big(E(t)^2+I(t)^2\big).
\label{eq:V_equiv}
\end{equation}
Using~\eqref{eq:V_equiv}, inequality~\eqref{eq:mean_bound_raz} can be rewritten
as
\begin{equation}
\frac{d}{dt}\mathbb{E}[V(t)]
\leq
-\Big(\delta - \kappa_3 c_0 q\Big)\mathbb{E}[V(t)],
\label{eq:V_decay}
\end{equation}
where $\delta>0$ is a constant depending on $\kappa_1,\kappa_2$ and the
equivalence bounds in~\eqref{eq:V_equiv}.

Therefore, if the parameters satisfy the condition
\begin{equation}
\kappa_3 c_0 q < \delta,
\label{eq:raz_stability_cond}
\end{equation}
then~\eqref{eq:V_decay} implies exponential decay of the mean-square Lyapunov
function:
\begin{equation}
\mathbb{E}[V(t)]
\leq
\mathbb{E}[V(0)]
\exp\!\left(-(\delta-\kappa_3 c_0 q)t\right),
\qquad t\geq 0.
\end{equation}
Consequently,
\[
\lim_{t\to\infty}\mathbb{E}[E(t)^2]
=
\lim_{t\to\infty}\mathbb{E}[I(t)^2]
=0,
\]
which establishes mean-square stability of the rumor-free equilibrium.

The Razumikhin argument shows that the destabilizing contribution of the delayed
term $I(t-\tau)$ can be controlled provided its weighted influence remains
dominated by the instantaneous dissipation terms. In particular, condition
\eqref{eq:raz_stability_cond} implies that the delay does not destroy
stability when the removal rates $(\gamma+\rho)$ and the exposure transition
rate $\sigma$ dominate the effective delayed transmission intensity.

\end{proof}

\begin{remark}
Condition~\eqref{eq:stochastic_threshold} provides a sufficient stability
criterion showing that stochastic perturbations reduce the effective invasion
capacity of the rumor. In particular, increasing noise intensity $\sigma_I$
raises the stability margin and can suppress rumor outbreaks even when the
deterministic threshold is close to unity.
\end{remark}

\section{Experimental Results and Discussion}

In this section, we present numerical experiments to validate the theoretical
analysis and to illustrate the impact of stochastic perturbations and time delay
on rumor propagation. 


Numerical approximations used stored data for the delayed term, and any negative results caused by stochasticity were projected onto the non-negative domain. To assess uncertainty, we run 100 Monte Carlo simulations to calculate mean trajectories and 95\% pointwise confidence intervals. Model validation is based on the peak spreader proportion $I(t)$, final outbreak size $R(T), F(T)$, and the confidence bands, which represent stochastic variability.



Figure \ref{fig:I_mean_ci}
depicts the temporal dynamics of the spreader class
derived from 100 Monte Carlo simulations. The average trajectory is represented by a solid curve, while the shaded region highlights the 95\% confidence interval.

\begin{figure}[!t]
\centering
\includegraphics[width=\columnwidth]{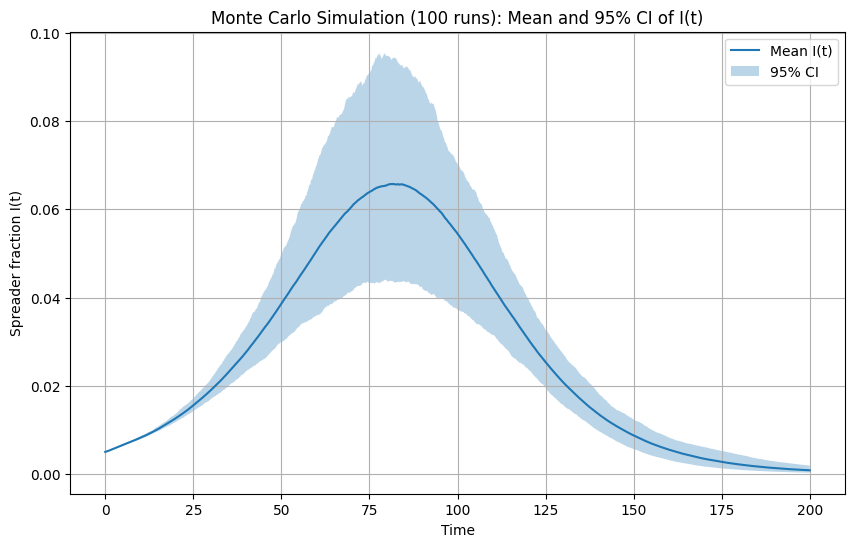}
\caption{Mean spreader population $I(t)$
(100-run Monte Carlo simulation) with 95\% confidence intervals, showing a sharp outbreak peak and subsequent decay. Stochastic perturbations drive increased variability around the peak.}
\label{fig:I_mean_ci}
\end{figure}


Let $E(t)$ and $S(t)$ denote the proportions of individuals in the
\emph{exposed} and \emph{active spreader} states at time $t$. Rumor growth
is driven by the transition
\[
E \xrightarrow{\beta} S,
\]
which can be modeled as
\begin{equation}
\frac{dS(t)}{dt} = \beta E(t) - \gamma S(t),
\end{equation}
where $\beta>0$ is the activation rate and $\gamma>0$ is the stifling rate.

The spreader population typically exhibits a unimodal trajectory with a
dominant peak
\[
t^\ast = \arg\max_{t\ge0} S(t),
\]
after which
\[
\lim_{t\to\infty} S(t)=0 .
\]

This asymptotic behavior indicates convergence to the rumor-free
equilibrium
\[
\mathcal{E}_0 = \{(E,S): S=0\},
\]
which is stable when the effective reproduction number $\mathcal{R}_0<1$,
implying the absence of a persistent rumor-endemic state.

When stochastic fluctuations are considered, the spreader dynamics can be
written as
\begin{equation}
dS(t) = \left[\beta E(t)-\gamma S(t)\right]dt + \sigma S(t)dW(t),
\end{equation}
where $\sigma>0$ denotes the noise intensity and $W(t)$ is a Wiener
process. The variance of $S(t)$ increases near $t^\ast$, producing a wider
confidence band around the peak, which indicates that random contact
fluctuations have the strongest influence during the rapid rumor growth
phase.

Fig.~\ref{fig:mean_all} shows the mean trajectories of all compartments
$S(t),E(t),I(t),R(t),I_g(t)$, and $F(t)$ over the simulation horizon.

\begin{figure}[!t]
\centering
\includegraphics[width=\columnwidth]{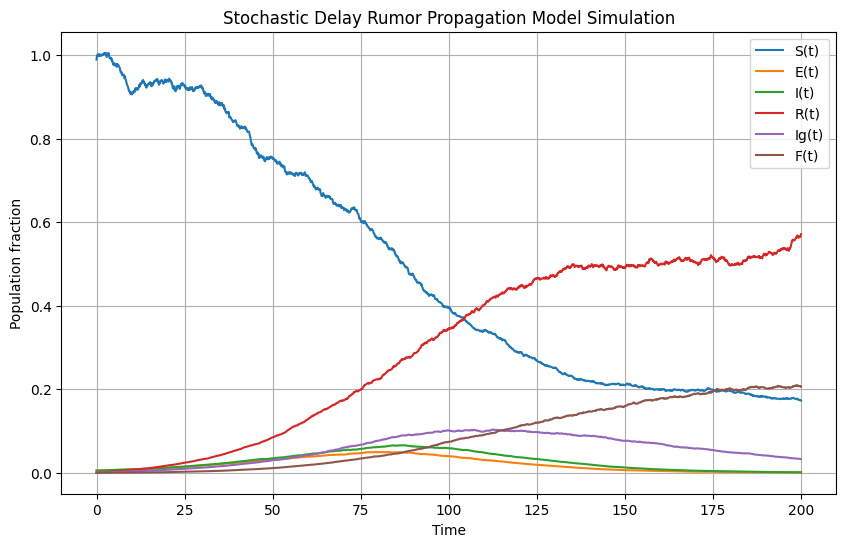}
\caption{The average trajectory of all compartments shows the spreader population $I(t)$
rising to an outbreak peak before declining. Meanwhile, both the removed and fact-checked populations demonstrate continuous accumulation over time.}
\label{fig:mean_all}
\end{figure}

The plots demonstrate that the susceptible population $S(t)$ decreases
monotonically as individuals become exposed through delayed contact with
spreader history. The exposed population $E(t)$ follows a similar outbreak
pattern as $I(t)$ but with an earlier peak, reflecting the latency stage of the
rumor dynamics. The removed class $R(t)$ increases steadily and eventually
captures a large proportion of the population, representing individuals who have
lost interest or stopped spreading the rumor. The skeptical class $I_g(t)$ rises
during the outbreak phase and later decreases due to the transition into the
fact-checking compartment. Finally, the fact-checked class $F(t)$ grows
monotonically, showing the cumulative impact of verification processes in
reducing rumor influence.

Table~\ref{tab:ablation_delay_R0} summarizes the ablation study on the combined effects of information delay $\tau$ and basic reproduction number $R_0$ on rumor dynamics. Two key performance indicators are presented: the mean maximum value of the spreader population, $\max I(t)$, and the final mean size of the epidemic, measured by $R(T)+F(T)$, along with their respective standard deviations calculated from Monte Carlo simulations.


First, for a fixed time delay $\tau$, increasing $R_0$ leads to a marked and monotonic rise in both the peak and final outbreak size. When $R_0 < 1$, the proportion of people spreading the rumor at the peak remains close to its initial value, and the final outbreak size is small, indicating that the rumor is not spreading significantly. When $R_0$ exceeds $1$, an abrupt transition is observed: the peak of $I(t)$ increases rapidly, and the final epidemic size reaches values exceeding 70\%–80\% of the population for $R_0 = 2.0$. This behavior is consistent with the theoretical threshold role of $R_0$, confirming that higher transmission intensity leads to more severe and widespread rumor outbreaks.


Secondly, for a given $R_0$, increasing the delay $\tau$ generally amplifies the spread of rumors. While peak spread values are only slightly affected for $R_0 < 1$, the effect of the delay becomes significant for $R_0 > 1$. In this case, longer delays lead to greater variability and, often, slightly larger epidemic peaks and final sizes. This indicates that the delay in information processing and response allows the rumor to circulate longer before corrective mechanisms kick in, thus increasing the magnitude and uncertainty of the epidemic.


Finally, the reported standard deviations highlight the increasing influence of stochastic effects as $R_0$ and $\tau$ increase. In particular, for large values of $R_0$ and significant delays, the variability in the final magnitude of the outbreak size becomes substantial, demonstrating that random fluctuations in social interactions can significantly alter the course of the infodemic, under identical parameter settings. Overall, these results corroborate the analytical findings and underscore the crucial importance of reducing $R_0$ and minimizing information dissemination delays to limit the spread of rumors.

\begin{table}[!t]
\centering
\caption{Ablation study of delay $\tau$ and basic reproduction number $R_0$.
Reported values are Monte Carlo means with standard deviations (Std).}
\label{tab:ablation_delay_R0}
\resizebox{\columnwidth}{!}{%
\begin{tabular}{|c|c|c|c|c|}
\hline
$\tau$ & $R_0$ & $\beta$ & Peak $I(t)$ (Mean $\pm$ Std) & Final Size $R(T)+F(T)$ (Mean $\pm$ Std) \\
\hline
0  & 0.5 & 0.075 & $0.00527 \pm 0.00006$ & $0.0194 \pm 0.0020$ \\
0  & 0.8 & 0.120 & $0.00544 \pm 0.00007$ & $0.0465 \pm 0.0085$ \\
0  & 1.0 & 0.150 & $0.00693 \pm 0.00350$ & $0.1233 \pm 0.0536$ \\
0  & 1.2 & 0.180 & $0.01618 \pm 0.00784$ & $0.2988 \pm 0.1191$ \\
0  & 1.5 & 0.225 & $0.04420 \pm 0.01301$ & $0.5726 \pm 0.1119$ \\
0  & 2.0 & 0.300 & $0.09704 \pm 0.01714$ & $0.8105 \pm 0.1062$ \\
\hline
5  & 0.5 & 0.075 & $0.00529 \pm 0.00008$ & $0.0234 \pm 0.0023$ \\
5  & 0.8 & 0.120 & $0.00541 \pm 0.00009$ & $0.0526 \pm 0.0095$ \\
5  & 1.0 & 0.150 & $0.00620 \pm 0.00175$ & $0.1233 \pm 0.0290$ \\
5  & 1.2 & 0.180 & $0.01331 \pm 0.00540$ & $0.2641 \pm 0.0844$ \\
5  & 1.5 & 0.225 & $0.03174 \pm 0.01058$ & $0.5230 \pm 0.1247$ \\
5  & 2.0 & 0.300 & $0.06572 \pm 0.01270$ & $0.7445 \pm 0.1110$ \\
\hline
10 & 0.5 & 0.075 & $0.00529 \pm 0.00006$ & $0.0262 \pm 0.0032$ \\
10 & 0.8 & 0.120 & $0.00540 \pm 0.00007$ & $0.0595 \pm 0.0129$ \\
10 & 1.0 & 0.150 & $0.00583 \pm 0.00066$ & $0.1224 \pm 0.0273$ \\
10 & 1.2 & 0.180 & $0.01130 \pm 0.00440$ & $0.2337 \pm 0.0592$ \\
10 & 1.5 & 0.225 & $0.02575 \pm 0.00873$ & $0.4709 \pm 0.1295$ \\
10 & 2.0 & 0.300 & $0.05448 \pm 0.01315$ & $0.7429 \pm 0.1261$ \\
\hline
\end{tabular}}
\end{table}

\section{Conclusion and Future Work}


This paper presents a stochastic differential equation model with a delay for rumor propagation,  incorporating behavioral delay, skepticism, and fact-checking.  Equilibrium analysis shows that the system admits a continuum of rumor-free equilibria,  and no positive endemic equilibrium. Local stability results,  establish that the rumor-free equilibrium is asymptotically stable for $R_0 < 1$ and unstable for $R_0 > 1$. Monte Carlo simulations validate the theoretical results and confirm that a higher $R_0$ increases both the peak of the outbreak,  and the final size of the outbreak, while an increase in the delay $\tau$ amplifies the severity of the infodemic outbreak as well as the associated uncertainty.

Future work may include extending the model to heterogeneous network
structures, estimating parameters using real social media datasets, and
investigating fractional-order, distributed-delay, or state-dependent delay
extensions to better capture long-memory effects in misinformation dynamics.

\appendix

\subsection{Local Existence and Uniqueness}
\label{appendix:lipschitz_local_sol}

Let 
\[
X(t)=\big(S(t),E(t),I(t),R(t),I_g(t),F(t)\big)^\top\in\mathbb{R}^6
\]
and define the delayed state
\[
X_\tau(t)=X(t-\tau).
\]
System~\eqref{eq:proposed_model} can be written in vector form as
\begin{equation}
dX(t)=b\big(X(t),X_\tau(t)\big)\,dt
+\Sigma\big(X(t)\big)\,dW(t),
\label{eq:vector_SDDE}
\end{equation}
where $W(t)\in\mathbb{R}^6$ is a standard Brownian motion and the drift vector
$b:\mathbb{R}^6\times\mathbb{R}^6\to\mathbb{R}^6$ is given by
\begin{equation}
b(x,y)=
\begin{pmatrix}
-\beta x_1y_3 \\
\beta x_1y_3-\sigma x_2 \\
\sigma x_2-(\gamma+\rho)x_3 \\
\gamma x_3 \\
\rho x_3-\theta x_5 \\
\theta x_5
\end{pmatrix},
\label{eq:drift_def}
\end{equation}
with $x=(x_1,\dots,x_6)^\top$ and $y=(y_1,\dots,y_6)^\top$. The diffusion matrix
$\Sigma:\mathbb{R}^6\to\mathbb{R}^{6\times 6}$ is diagonal:
\begin{equation}
\Sigma(x)=
\mathrm{diag}\big(
\sigma_S x_1,\,
\sigma_E x_2,\,
\sigma_I x_3,\,
\sigma_R x_4,\,
\sigma_{Ig} x_5,\,
\sigma_F x_6
\big).
\label{eq:diffusion_def}
\end{equation}

\paragraph{Local Lipschitz continuity.}
Fix $R>0$ and consider $x,\bar{x},y,\bar{y}\in\mathbb{R}^6$ satisfying
\[
\|x\|\leq R,\quad \|\bar{x}\|\leq R,\quad
\|y\|\leq R,\quad \|\bar{y}\|\leq R.
\]
Then, using the inequality $|ab-\bar{a}\bar{b}|
\leq |a-\bar{a}||b|+|\bar{a}||b-\bar{b}|$, we obtain
\begin{align*}
|x_1y_3-\bar{x}_1\bar{y}_3|
&\leq |x_1-\bar{x}_1||y_3|
+|\bar{x}_1||y_3-\bar{y}_3| \\
&\leq R|x_1-\bar{x}_1|+R|y_3-\bar{y}_3|.
\end{align*}
Hence, for the first drift component,
\[
|b_1(x,y)-b_1(\bar{x},\bar{y})|
=\beta|x_1y_3-\bar{x}_1\bar{y}_3|
\leq \beta R\big(|x_1-\bar{x}_1|+|y_3-\bar{y}_3|\big).
\]
Similarly, for the second component,
\begin{align*}
|b_2(x,y)-b_2(\bar{x},\bar{y})|
&=
\big|\beta x_1y_3-\sigma x_2
-\beta \bar{x}_1\bar{y}_3+\sigma \bar{x}_2\big| \\
&\leq
\beta|x_1y_3-\bar{x}_1\bar{y}_3|
+\sigma|x_2-\bar{x}_2| \\
&\leq
\beta R(|x_1-\bar{x}_1|+|y_3-\bar{y}_3|)
+\sigma|x_2-\bar{x}_2|.
\end{align*}
The remaining components are linear and satisfy analogous estimates. Therefore,
there exists a constant $L_R>0$ such that
\begin{equation}
\begin{split}
&\|b(x,y)-b(\bar{x},\bar{y})\|
\leq
L_R\big(\|x-\bar{x}\|+\|y-\bar{y}\|\big),
\\
&\forall \|x\|,\|\bar{x}\|,\|y\|,\|\bar{y}\|\leq R.  
\end{split}
\label{eq:drift_Lip}
\end{equation}

For the diffusion coefficient, since $\Sigma$ is diagonal,
\[
\|\Sigma(x)-\Sigma(\bar{x})\|^2
=\sum_{i=1}^6 \sigma_i^2(x_i-\bar{x}_i)^2
\leq C\|x-\bar{x}\|^2,
\]
where $C=\max\{\sigma_S^2,\sigma_E^2,\sigma_I^2,
\sigma_R^2,\sigma_{Ig}^2,\sigma_F^2\}$. Thus,
\begin{equation}
\|\Sigma(x)-\Sigma(\bar{x})\|
\leq
\sqrt{C}\,\|x-\bar{x}\|,
\label{eq:diff_Lip}
\end{equation}
which shows that $\Sigma$ is globally Lipschitz and hence locally Lipschitz.

\paragraph{Linear growth condition.}
Using Young's inequality $2ab\leq a^2+b^2$, we estimate
\[
|x_1y_3|
\leq \tfrac12(x_1^2+y_3^2)
\leq \tfrac12(\|x\|^2+\|y\|^2).
\]
Therefore,
\[
\|b(x,y)\|^2
\leq C_1\big(1+\|x\|^2+\|y\|^2\big),
\]
for some constant $C_1>0$ depending only on the parameters. Similarly,
\[
\|\Sigma(x)\|^2
=\sum_{i=1}^6 \sigma_i^2 x_i^2
\leq C_2\|x\|^2
\leq C_2\big(1+\|x\|^2+\|y\|^2\big),
\]
where $C_2=\max\{\sigma_S^2,\sigma_E^2,\sigma_I^2,
\sigma_R^2,\sigma_{Ig}^2,\sigma_F^2\}$.

Hence, there exists $K>0$ such that
\begin{equation}
\|b(x,y)\|^2+\|\Sigma(x)\|^2
\leq K\big(1+\|x\|^2+\|y\|^2\big),
\qquad \forall x,y\in\mathbb{R}^6.
\label{eq:linear_growth}
\end{equation}

\paragraph{Existence and uniqueness.}
From \eqref{eq:drift_Lip}, \eqref{eq:diff_Lip}, and \eqref{eq:linear_growth},
the coefficients of system~\eqref{eq:vector_SDDE} satisfy local Lipschitz
continuity and the linear growth condition. Therefore, by standard existence and
uniqueness results for stochastic delay differential equations, there exists a
unique maximal $\{\mathcal{F}_t\}$-adapted continuous strong solution $X(t)$
defined on $[-\tau,T_e)$, where $T_e$ denotes the explosion time.
\hfill $\blacksquare$

\subsection{Global Existence (Non-Explosion)}
\label{appendix:global_solution}

Let 
\[
X(t)=\big(S(t),E(t),I(t),R(t),I_g(t),F(t)\big)^\top
\]
denote the solution of system~\eqref{eq:proposed_model}. Since the drift
and diffusion coefficients are locally Lipschitz continuous, there exists a
unique maximal local strong solution $X(t)$ defined on $[-\tau,T_e)$, where
$T_e$ denotes the explosion time.

We now show that $T_e=\infty$ almost surely.

\paragraph{Step 1: Stopping times.}
For each $n\in\mathbb{N}$, define the stopping time
\begin{equation}
\tau_n=\inf\Big\{t\geq 0:\|X(t)\|\geq n\Big\},
\end{equation}
with the convention $\inf\emptyset=\infty$. Clearly, $\{\tau_n\}_{n\geq 1}$ is
an increasing sequence and
\[
\tau_n\uparrow T_e \quad \text{as } n\to\infty,
\qquad \text{a.s.}
\]

\paragraph{Step 2: Lyapunov function.}
Consider the quadratic Lyapunov function
\begin{equation}
\begin{split}
V(X(t))&=\|X(t)\|^2\\
&=S(t)^2+E(t)^2+I(t)^2+R(t)^2+I_g(t)^2+F(t)^2.
\end{split}
\end{equation}

Applying It\^{o}'s formula to $V(X(t\wedge\tau_n))$ yields
\begin{equation}
dV(X(t))
=\mathcal{L}V(X(t),X(t-\tau))dt+dM(t),
\label{eq:ItoV_column}
\end{equation}
where $M(t)$ is a local martingale and $\mathcal{L}$ is the infinitesimal
generator of the SDDE.

\paragraph{Step 3: Generator estimate.}
Let $I_\tau=I(t-\tau)$. Using \eqref{eq:proposed_model}, we obtain
\begin{align}
\mathcal{L}V
&=2S(-\beta SI_\tau)
+2E(\beta SI_\tau-\sigma E) \nonumber \\
&\quad +2I(\sigma E-(\gamma+\rho)I)
+2R(\gamma I) \nonumber \\
&\quad +2I_g(\rho I-\theta I_g)
+2F(\theta I_g) \nonumber \\
&\quad +\sigma_S^2S^2+\sigma_E^2E^2+\sigma_I^2I^2
+\sigma_R^2R^2 \nonumber \\
&\quad +\sigma_{Ig}^2I_g^2+\sigma_F^2F^2.
\label{eq:LV_expand_column}
\end{align}

Using Young's inequality $2ab\leq a^2+b^2$, we estimate the cross terms as
\begin{align*}
2\beta SEI_\tau &\leq \beta(S^2+E^2I_\tau^2)
\leq \beta(S^2+E^2+I_\tau^2), \\
2\sigma IE &\leq \sigma(I^2+E^2), \qquad
2\gamma RI \leq \gamma(R^2+I^2), \\
2\rho I_gI &\leq \rho(I_g^2+I^2), \qquad
2\theta FI_g \leq \theta(F^2+I_g^2).
\end{align*}

Hence, there exists a constant $C>0$ such that
\begin{equation}
\mathcal{L}V(X(t),X(t-\tau))
\leq
C\Big(1+\|X(t)\|^2+\|X(t-\tau)\|^2\Big).
\label{eq:LV_bound_column}
\end{equation}

\paragraph{Step 4: Razumikhin-type estimate for the delay term.}
To handle the delayed contribution $\|X(t-\tau)\|^2$, we employ a Razumikhin-type
argument. Assume that there exists a constant $q>1$ such that the Razumikhin
condition holds:
\begin{equation}
V(X(t-\tau))\leq q\,V(X(t)),
\qquad \forall t\geq 0.
\label{eq:razumikhin_cond_global}
\end{equation}
Condition~\eqref{eq:razumikhin_cond_global} is standard in the stability theory
of stochastic delay systems and ensures that the delayed state remains bounded
relative to the current state.

Using \eqref{eq:razumikhin_cond_global} in \eqref{eq:LV_bound_column}, we obtain
\begin{equation}
\mathcal{L}V(X(t),X(t-\tau))
\leq
C\Big(1+(1+q)V(X(t))\Big).
\label{eq:LV_bound_raz}
\end{equation}

\paragraph{Step 5: Moment bound.}
Integrating \eqref{eq:ItoV_column} from $0$ to $t\wedge\tau_n$ yields
\begin{equation}
\begin{split}
V(X(t\wedge\tau_n))
&=V(X(0))
+\int_0^{t\wedge\tau_n}\!\!\!\!\mathcal{L}V(X(s),X(s-\tau))\,ds
\\&+M(t\wedge\tau_n).
\end{split}
\end{equation}

Taking expectation and using the martingale property
$\mathbb{E}[M(t\wedge\tau_n)]=0$, together with \eqref{eq:LV_bound_raz}, we
obtain
\begin{align}
\mathbb{E}\big[V(X(t\wedge\tau_n))\big]
&\leq
\mathbb{E}[V(X(0))]
+C t \nonumber \\
&\quad +C(1+q)\int_0^t
\mathbb{E}\big[V(X(s\wedge\tau_n))\big]ds.
\label{eq:Gronwall_column}
\end{align}

By Gr\"onwall's inequality, it follows that
\begin{equation}
\mathbb{E}\big[V(X(t\wedge\tau_n))\big]
\leq
\Big(\mathbb{E}[V(X(0))]+Ct\Big)
\exp\!\big(C(1+q)t\big),
\qquad t\geq 0,
\label{eq:moment_bound_column}
\end{equation}
which provides a uniform moment bound independent of $n$.

\paragraph{Step 6: Non-explosion.}
Assume by contradiction that $\mathbb{P}(T_e<\infty)>0$. Then there exist
$T>0$ and $\varepsilon>0$ such that
\[
\mathbb{P}(T_e\leq T)\geq \varepsilon.
\]
Since $\tau_n\uparrow T_e$, there exists $n_0$ such that for all $n\geq n_0$,
\[
\mathbb{P}(\tau_n\leq T)\geq \frac{\varepsilon}{2}.
\]

On the event $\{\tau_n\leq T\}$ we have $\|X(\tau_n)\|\geq n$, hence
$V(X(\tau_n))\geq n^2$. Therefore,
\begin{align}
\mathbb{E}\big[V(X(T\wedge\tau_n))\big]
&\geq
\mathbb{E}\Big[V(X(\tau_n))\mathbf{1}_{\{\tau_n\leq T\}}\Big] \nonumber \\
&\geq
n^2\mathbb{P}(\tau_n\leq T)
\geq
\frac{\varepsilon}{2}n^2.
\end{align}

Letting $n\to\infty$ implies that
$\mathbb{E}[V(X(T\wedge\tau_n))]\to\infty$, which contradicts the uniform bound
\eqref{eq:moment_bound_column}. Hence,
\[
\mathbb{P}(T_e=\infty)=1.
\]

Therefore, system~\eqref{eq:proposed_model} admits a unique global
strong solution $X(t)$ defined for all $t\geq 0$ almost surely.
\hfill $\blacksquare$

\bibliographystyle{IEEEtran}
\bibliography{main}

@article{ishizumi2024beyond,
  title={Beyond misinformation: developing a public health prevention framework for managing information ecosystems},
  author={Ishizumi, Atsuyoshi and Kolis, Jessica and Abad, Neetu and Prybylski, Dimitri and Brookmeyer, Kathryn A and Voegeli, Christopher and Wardle, Claire and Chiou, Howard},
  journal={The Lancet Public Health},
  volume={9},
  number={6},
  pages={e397--e406},
  year={2024},
  publisher={Elsevier}
}

@article{caceres2022impact,
  title={The impact of misinformation on the COVID-19 pandemic},
  author={Caceres, Maria Mercedes Ferreira and Sosa, Juan Pablo and Lawrence, Jannel A and Sestacovschi, Cristina and Tidd-Johnson, Atiyah and Rasool, Muhammad Haseeb UI and Gadamidi, Vinay Kumar and Ozair, Saleha and Pandav, Krunal and Cuevas-Lou, Claudia and others},
  journal={AIMS public health},
  volume={9},
  number={2},
  pages={262},
  year={2022}
}

@incollection{world2024infodemic,
  title={Infodemic management: protecting people from harmful health information in emergencies},
  author={World Health Organization and others},
  booktitle={Infodemic management: protecting people from harmful health information in emergencies},
  year={2024}
}

@article{tangcharoensathien2020framework,
  title={Framework for managing the COVID-19 infodemic: methods and results of an online, crowdsourced WHO technical consultation},
  author={Tangcharoensathien, Viroj and Calleja, Neville and Nguyen, Tim and Purnat, Tina and D’Agostino, Marcelo and Garcia-Saiso, Sebastian and Landry, Mark and Rashidian, Arash and Hamilton, Clayton and AbdAllah, Abdelhalim and others},
  journal={Journal of medical Internet research},
  volume={22},
  number={6},
  pages={e19659},
  year={2020},
  publisher={JMIR Publications Inc., Toronto, Canada}
}

@book{tomes2022historical,
  title={What are the historical roots of the COVID-19 infodemic? Lessons from the past},
  author={Tomes, Nancy and Parry, Manon S},
  year={2022},
  publisher={World Health Organization. Regional Office for Europe}
}

@misc{gisondi2022deadly,
  title={A deadly infodemic: social media and the power of COVID-19 misinformation},
  author={Gisondi, Michael A and Barber, Rachel and Faust, Jemery Samuel and Raja, Ali and Strehlow, Matthew C and Westafer, Lauren M and Gottlieb, Michael},
  journal={Journal of medical Internet research},
  volume={24},
  number={2},
  pages={e35552},
  year={2022},
  publisher={JMIR Publications Toronto, Canada}
}

@article{do2022infodemics,
  title={Infodemics and health misinformation: a systematic review of reviews},
  author={Do Nascimento, Israel Junior Borges and Pizarro, Ana Beatriz and Almeida, Jussara M and Azzopardi-Muscat, Natasha and Gon{\c{c}}alves, Marcos Andr{\'e} and Bj{\"o}rklund, Maria and Novillo-Ortiz, David},
  journal={Bulletin of the World Health Organization},
  volume={100},
  number={9},
  pages={544},
  year={2022}
}

@article{govindankutty2024epidemic,
  title={Epidemic modeling for misinformation spread in digital networks through a social intelligence approach},
  author={Govindankutty, Sreeraag and Gopalan, Shynu Padinjappurath},
  journal={Scientific Reports},
  volume={14},
  number={1},
  pages={19100},
  year={2024},
  publisher={Nature Publishing Group UK London}
}

@article{daley1965stochastic,
  author = {D. J. Daley and D. G. Kendall},
  title = {Stochastic rumours},
  journal = {Journal of the Institute of Mathematics and Its Applications},
  volume = {1},
  number = {1},
  pages = {42--55},
  year = {1965}
}

@article{li2021social,
  title={Social media rumor refutation effectiveness: Evaluation, modelling and enhancement},
  author={Li, Zongmin and Zhang, Qi and Du, Xinyu and Ma, Yanfang and Wang, Shihang},
  journal={Information Processing \& Management},
  volume={58},
  number={1},
  pages={102420},
  year={2021},
  publisher={Elsevier}
}

@article{zhao2021detecting,
  title={Detecting health misinformation in online health communities: Incorporating behavioral features into machine learning based approaches},
  author={Zhao, Yuehua and Da, Jingwei and Yan, Jiaqi},
  journal={Information Processing \& Management},
  volume={58},
  number={1},
  pages={102390},
  year={2021},
  publisher={Elsevier}
}

@article{czerniak2023scoping,
  title={A scoping review of digital health interventions for combating COVID-19 misinformation and disinformation},
  author={Czerniak, Katarzyna and Pillai, Raji and Parmar, Abhi and Ramnath, Kavita and Krocker, Joseph and Myneni, Sahiti},
  journal={Journal of the American Medical Informatics Association},
  volume={30},
  number={4},
  pages={752--760},
  year={2023},
  publisher={Oxford University Press}
}

@article{daley1964epidemics,
  title={Epidemics and rumours},
  author={Daley, Daryl J and Kendall, David G},
  journal={Nature},
  volume={204},
  number={4963},
  pages={1118--1118},
  year={1964},
  publisher={Nature Publishing Group UK London}
}

@article{sun2021uncertain,
  title={An uncertain SIR rumor spreading model},
  author={Sun, Hang and Sheng, Yuhong and Cui, Qing},
  journal={Advances in Difference Equations},
  volume={2021},
  number={1},
  pages={286},
  year={2021},
  publisher={Springer}
}

\end{document}